\def\masterthesis{0}
\def\cryptology{0} % For journal of cryptology
\def\sigconf{0} %ACM SIG conferfence
\def\big{0} % For using a huge font
\def\draft{0}
\def\anon{0}
\def\shownomenclature{0}
\def\smallbib{1} %This should only be used if not submitted to a journal/conference with proceedings, as these use their own format
\def\toc{0} %Show TOC. Should be set to 1 if paper is long (say, above 30 pages), and for all types of thesis.
\newcommand{\agnote}[1]{\authnote{Alex}{#1}{brown}}
\DeclareMathAlphabet{\mathpzc}{OT1}{pzc}{m}{it}
\newcommand{\dk}{\ensuremath{\mathsf{dk}}\xspace}
\newcommand{\gen}{\ensuremath{\mathsf{Gen}}\xspace}
\newcommand{\qpk}{\ensuremath{\mathpzc{qpk}}\xspace} 
\newcommand{\qpkgen}{\ensuremath{\mathpzc{QPKGen}}\xspace}
\newcommand{\qenc}{\ensuremath{\mathpzc{Enc}}\xspace}
\newcommand{\qdec}{\ensuremath{\mathpzc{Dec}}\xspace}
\newcommand{\del}{\ensuremath{\mathpzc{Del}}\xspace}
\newcommand{\qc}{\ensuremath{\mathpzc{qc}} \xspace}
 \newcommand{\prfspd}{\text{Pseudorandom function-like state generator with proofs of destruction}} %%
 \newcommand{\PRSPD}{\textsf{PRSPD}}%textabbrevbol was throwing errors
 \newcommand{\PRFSPD}{\textsf{PRFSPD}}
 \newcommand{\ver}{\ensuremath{\mathpzc{Ver}}\xspace}
 \newcommand{\Test}{\ensuremath{\mathsf{Test}}}
 \newcommand{\secproof}{\ensuremath{\mathsf{Unclonability\text{-}of\text{-}proofs}}\xspace}
 \newcommand{\experiment}[2]{\ensuremath{\mathsf{Exp}^{#1,#2}_{\secpar}}\xspace}
 \newcommand{\clon}[2]{\ensuremath{\mathsf{Clonning}\text{-}\experiment{#1}{#2}}\xspace}
 \newcommand{\haar}{\ensuremath{\mathpzc{Haar}}\xspace}
 \newcommand{\uniform}[1]{\ensuremath{\xleftarrow{u}\{0,1\}^{#1}}\xspace}
 \newcommand{\qpt}{\textsf{QPT}}
\begin{document}
\ifnum\masterthesis=0
    \title{Encryption with Quantum Public Keys
    %\ifdraft{\\(working draft)}
    }
\fi

\ifnum\anon=0
    \ifnum\masterthesis=0
        \ifnum\sigconf=0
            \ifnum\cryptology=1
                \author{Or Sattath}
                \affil{Department of Computer Science, Ben Gurion University of the Negev, Beersheba, Israel\\
                sattath@bgu.ac.il}
            \else
                \author[1]{Alex B. Grilo}
                \affil[1]{Sorbonne Universit\'e, CNRS, LIP6}
                \author[2]{Or Sattath}
                \affil[2]{Computer Science Department, Ben-Gurion University of the Negev}
                \author[1]{Quoc-Huy Vu}
                \date{}
            \fi
        \else
            \author{Or Sattath}
            \affiliation{%
            \institution{Computer Science Department, Ben-Gurion University of the Negev}
            \country{Israel}}
        \fi
    \fi
\else
    \ifnum\sigconf=0
        \author{}
    \fi
\fi

\ifnum\masterthesis=0
    \ifnum\sigconf=0
        \maketitle
    \fi
\fi

\ifnum\masterthesis=1
    \begin{titlepage}
        \centering
        { Ben-Gurion University of the Negev}
        
        {The Faculty of Natural Sciences}
        
        {\small The Department of Computer Science}
        
        \vspace{2cm}
        
        \includegraphics[scale=\logoscale]{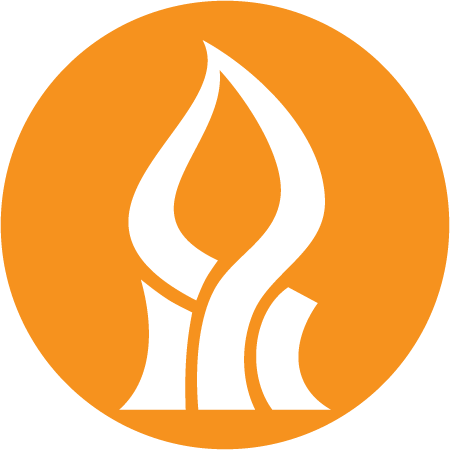}
        
        \vspace{2cm}
        
        {\Large \bfseries Template of Thesis}
        
        \vspace{2cm}
        
        {\small Thesis submitted in partial fulfillment of the requirements for the Master of Sciences degree}
        
        \vspace{1cm}
        
        {\bfseries Name of Student}
        
        {Under the supervision of Dr. Or Sattath}
        
        \vspace{2cm}
        
        \today
    \end{titlepage}
    
    \begin{titlepage}
        \centering
        { Ben-Gurion University of the Negev}
        
        {The Faculty of Natural Sciences}
        
        {\small The Department of Computer Science}
        
        \vspace{2cm}
        
        {\Large \bfseries Template of Thesis}
        
        \vspace{2cm}
        
        {\small Thesis submitted in partial fulfillment of the requirements for the Master of Sciences degree}
        
        \vspace{1cm}
        
        {\bfseries Name of Student}
        
        {Under the supervision of Dr. Or Sattath}
        
        \vspace{1cm}
        
        {\small Signature of student: \longunderline Date: \longunderline}
        
        \vspace{0.5cm}
        
        {\small Signature of supervisor: \longunderline Date: \longunderline}
        
        \vspace{0.5cm}
        
        \begin{changemargin}{-1.5cm}{-1.5cm}
        \centering
            {\small Signature of chairperson of the committee for graduate studies: \longunderline Date: \longunderline}
        \end{changemargin}
        \vspace{2cm}
        
        \today
    \end{titlepage}

    \pagenumbering{roman}
    \begin{center}
        {\large \bfseries Title of Thesis}
        
        \vspace{0.5cm}
        
        {\bfseries Name of Student}
        
        \vspace{0.5cm}
        
        Thesis submitted in partial fulfillment of the requirements for the Master of Sciences degree
        
        \vspace{0.25cm}
        
        {Ben-Gurion University of the Negev}
        
        \vspace{0.25cm}
        
        \today
        
        \vspace{2cm}
        
        {\bfseries Abstract}
        
        \vspace{0.5cm}
    \end{center}
\else 
    \ifnum\sigconf=0
        \begin{abstract}
    \fi
\fi
\ifnum\sigconf=0
%IN ARXIV SUBMISSIONS, REMEMBER TO INDENT NEW PARAGRAPHS SINCE NEWLINES ARE IGNORED. 
It is an important question to find constructions of quantum cryptographic protocols which rely on weaker computational assumptions than classical protocols. Recently, it has been shown that oblivious transfer and multi-party computation can be constructed from one-way functions, whereas this is impossible in the classical setting in a black-box way. In this work, we study the question of building quantum public-key encryption schemes from one-way functions and even weaker assumptions. Firstly, we revisit the definition of IND-CPA security to this setting. Then, we propose three schemes for quantum public-key encryption from one-way functions, pseudorandom function-like states with proof of deletion and pseudorandom function-like states, respectively.

\fi 
\ifnum\masterthesis=0
    \ifnum\sigconf=0
        \end{abstract}
    \fi
\fi

\ifnum\sigconf=1
    \begin{abstract}
        Abstract of SigConf goes here.
    \end{abstract}
    \keywords{}
    \maketitle
\fi

\ifnum\masterthesis=1
    \pagebreak
    \pagenumbering{arabic}
    \subsection*{Acknowledgments}
    \setcounter{tocdepth}{3}
    
    \pagebreak
    \tableofcontents
    \listoffigures % Remove if there are no figures
    \listoftables % Remove if there are no tables
    \pagebreak
\fi

\ifnum\toc=1
    \tableofcontents 
\fi

\section{Introduction} % (fold)
\label{sec:introduction}

The use of quantum resources to enable cryptographic tasks under weaker
assumptions (or even {\em unconditionally}) than classically were actually the
first concrete proposals of quantum computing, with the seminal quantum money
protocol of Wiesner~\cite{Wie83} and the key-exchange protocol of Bennet and
Brassard~\cite{BB84}.

Since then, it became a fundamental question in the field of quantum
cryptography to find other primitives that can be implemented under weaker
computational assumptions. It has recently shown that there exist quantum
protocols for Oblivious Transfer (and therefore arbitrary multi-party
computation (MPC)) based on the existence of one-way functions
(OWF)~\cite{CKM21a,GLSV21}.  Moreover, the proposed protocols use simple quantum
technology that is available currently.  The assumption to construct these
primitives has been recently improved by showing that the existence of
pseudo-random states (PRS) is sufficient for such primitives.  Notice that
existence of PRS is plausibly a strictly weaker assumption than the existence of
OWF, given that PRS families can be constructed from OWF in a black-box way~\cite{JLS18}, and
we have oracle separations between PRS and OWF~\cite{Kre21,KQST22}.

We notice that classically, OT and MPC are ``Cryptomania'' objects, meaning that they can be
constructed from assumptions that imply public-key encryption (PKE), but there are oracle separations
between OWF and  PKE (and thus OT and MPC)~\cite{IR89}. Therefore, we do not expect that
such powerful objects can be built classically from OWF. In this work, we investigate the following natural question: 
\begin{tcolorbox}
\begin{center}
    Can we have quantum protocols for public-key encryption, the heart of
    Cryptomania, based on post-quantum one-way functions, or even weaker assumptions?
\end{center} 
\end{tcolorbox}

Recent results imply that it is improbable to achieve PKE schemes from OWF if the public-key and ciphertext are classical even if the encryption or decryption algorithms are quantum~\cite{AustrinCCFLM22}. Therefore, in this work, we will consider schemes where the public-key or ciphertext are quantum states.

We notice that the first problem that we need to address is the syntax of
quantum public-key encryption (qPKE) and the corresponding security games. We
need to provide a general definition for qPKE where both the public-key and ciphertext
might be general quantum states. 
Furthermore, we note that if the public-key is a quantum state, it might be measured, and the ciphertexts might depend on the measurement outcome.
This motivates a stronger definition in which the adversary gets oracle access to the encryption, which we call IND-CPA-EO security.

With our new security definition in hand, we propose three protocols for
implementing qPKE from OWF or weaker assumptions,  with different
advantages and disadvantages. More concretely, we show the following:
\begin{enumerate}
    \item Assuming the existence of post-quantum OWFs, there exists a qPKE scheme
      with quantum public-keys and classical ciphertexts that is IND-CPA-EO
      security.
    \item Assuming the existence of pseudo-random function-like states with
      proof of destruction (PRFSPDs), there is a qPKE scheme with quantum public-key and
      classical ciphertext that is IND-CPA-EO secure.
    \item Assuming the existence of
      pseudo-random function-like states (PRFSs) with super-logarithmic
      input-size\footnote{Note that PRS implies PRFS with logarithmic size inputs.
      No such implication is known for super-logarithmic inputs.}, there is a
      qPKE scheme with quantum public-key and quantum ciphertext. In this
      scheme, the quantum public key enables the encryption of a single
      message.
\end{enumerate}

We would like to stress that for the first two constructions, even if the public-key is a quantum state, the
ciphertexts are classical and one quantum public-key can be used to encrypt
multiple messages.
Our third construction shows how to construct quantum public key encryption from assumptions (the existence of pseudorandom function-like states) which are potentially weaker than post-quantum one-way functions, but the achieved protocol only allows the encryption of one message per public-key.

We would also like to remark that it has been recently shown that OWFs imply PRFSs with super-logarithmic input-size~\cite{AQY21} and PRFSPDs~\cite{BBSS23}.
Therefore, the security of the second and third protocols is based on a potentially weaker
cryptographic assumption than the first one.
Furthermore, PRFSs with super-logarithmic input-size is \emph{separated} from one-way functions~\cite{Kre21}; therefore, our third result shows a black-box separation between a certain form of quantum public key encryption and one-way functions.

However, the first protocol is much simpler to describe
and understand since it only uses standard (classical) cryptographic objects. Moreover, it is the only scheme that achieves perfect correctness.

\subsection{Technical overview}
\label{sec:technical_overview}
In this section we provide a technical overview of our results. In  \Cref{sec:intro-definition}, we explain the challenges and choices in order to define qPKE and its security definition. In \Cref{sec:intro-constructions}, we present our constructions for qPKE. 

\subsubsection{Definitions of qPKE and IND-CPA-EO}
\label{sec:intro-definition}

In order to consider public-key encryption schemes with quantum public-keys, we
need to first revisit the security definition and we define a new security game that we call IND-CPA-EO. 

In the
classical-key case (even with quantum ciphertexts), the adversary is given a
copy the public-key $\pk$ and therefore
is able to run the encryption algorithm $\enc(\pk, \cdot)$ as many times they want (where the only
constraint is that the adversary is polynomially bounded), and just then choose
the messages $m_0$ and $m_1$ in the IND-CPA game.

In the quantum public-key case, the first issue is that $\enc(\rho_{\qpk},\cdot)$ might
consume the quantum public-key $\rho_{\qpk}$.  Moreover, having more copies of the quantum state could leak
information to the adversary (which cannot be the case in the classical-key case,
since the adversary {\em can} copy $\pk$). Therefore, the first modification in IND-CPA towards IND-CPA-EO is that the adversary is given multiple
copies of the public-key $\rho_{\qpk}$.

Secondly, it could be the case that $\enc(\rho_\qpk,\cdot)$ measures $\rho_\qpk$ and
modifying it to $\rho_\qpk'$, which is different from $\rho_\qpk$ but still a valid
key that enables encryption. The second modification that we need is syntactic: $\enc(\rho_\qpk, m)$ outputs $(\rho_\qpk',c)$, where $c$ is used as the ciphertext
and $\rho_\qpk'$ is used as the key to encrypt the next message. 

We notice that in the previous discussion, it could also be the case that $\enc$
measures $\rho_{\qpk}$ and the measurement outcome is used to encrypt all the
measurements. In this case, even if the adversary is given multiple copies of
$\rho_{\qpk}$, they would not be able to post-select on the same measurement
outcome and the distribution of ciphertexts that they could generate would be
different from the ones in the game. Therefore, we consider a stronger notion of
security where the adversary has also access to the encryption oracle that will
be used in the IND-CPA-EO game. We notice that this issue does not make
sense classically since the public-key used by the challenger and adversary is
exactly the same and the distribution of the ciphertexts would be the same.

Finally, we  would like to mention a few problems with having $\rho_{\qpk}$ as a mixed state. Firstly, there is no efficient way of comparing if
the two given public-keys are the same, preventing an honest decryptor to ``compare'' a purported public-key, whereas for pure states, this can be achieved using the swap-test. 
Secondly, if mixed states are allowed, then the notions of symmetric and public key encryption coincide, both in the classical and quantum setting: Consider a symmetric encryption scheme $(\mathsf{SKE}.\keygen, \mathsf{SKE}.\enc,\mathsf{SKE}.\dec)$. We can transform it into a public-key scheme. To generate the keys, we use the output of $\mathsf{SKE}.\keygen$ as the secret-key and use it to create the uniform mixture $\frac{1}{2^n} \sum_{\in\{0,1\}^n}\ketbra{x}\tensor \ketbra{Enc_{sk}(x)}$ as public-key. The ciphertext of a message $m$ is $(\enc_x(m),\enc_{sk}(x))$. To decrypt, the decryptor would first recover $x$ by decrypting the second element in the ciphertext using $sk$, and then recover $m$ by decrypting the first item using $x$ as the secret key. 
Therefore,
we have that the meaningful notion of PKE with quantum public-keys should
consider only pure states as quantum public-keys.

\subsubsection{Constructions for qPKE}
\label{sec:intro-constructions}
As previously mentioned, we propose in this work three schemes for qPKE, based on different assumptions. 

\paragraph{QPKE from OWFs.} Our first scheme is based on the existence of post-quantum pseudo-random functions (PRF) and
post-quantum IND-CPA secure symmetric-key encryption schemes, and both of these primitives can be
constructed from post-quantum OWFs. More concretely, let $\{f_k\}_k$ be a keyed PRF family and $(\mathsf{SE}.\enc,\mathsf{SE}.\dec)$ be a symmetric-key encryption scheme. The
secret key $\dk$ of our scheme is a uniformly random key for the PRF, and for a fixed $\dk$,
the quantum public-key state is 
\begin{align}\label{eq:public-key-owf}
\ket{\qpk_\dk}=\frac{1}{\sqrt{2^\secpar}} \sum_{x \in
  \{0,1\}^\secpar}\ket{x}\ket{f_\dk(x)}.
\end{align}
For clarity, we will drop the index of $\ket{\qpk}$ when $\dk$ is clear from the context. 

The encryption algorithm will then measure $\ket{\qpk}$ in the computation
basis leading to the outcome $(x^*, f_{\dk}(x^*))$. The ciphertext of a message
$m$ is $(x^*,\mathsf{SE}.\enc_{f(x^*)}(m))$ and
the decryption algorithm receives as input a ciphertext $(\hat{x},\hat{c})$ and
outputs $\mathsf{SE}.\dec_{f(\hat{x})}(\hat{c})$.

Using a hybrid argument, we prove that any adversary that breaks the
security of this qPKE scheme can be used to break the security of the PRF family
or the security of the symmetric-key encryption scheme.
The formal construction and its proof of security is given in~\cref{sect:qpke-from-owf}.

We notice that such a scheme allows the encryption/decryption of many messages using the same
measurement outcome $(x^*, f_{\dk}(x^*))$.

\paragraph{QPKE from PRFSPDs.} Our second scheme is based on pseudo-random function-like states with proof of destruction (PRFSPDs), which was recently defined in~\cite{BBSS23}. A family of states $\{\ket{\psi_{k,x}}\}_{k,x}$ is pseudo-random function-like~\cite{AQY21} if 
\begin{enumerate}
\item there is a quantum polynomial-time algorithm $\gen$ such that 
\[\gen(k,\sum_{x} \alpha_x \ket{x}) = \sum_{x} \alpha_x \ket{x}\ket{\psi_{k,x}} \text{, and}\] 
  \item no \(\qpt\) adversary can distinguish $(\ket{\psi_1},...,\ket{\psi_\ell})$ from
    $(\ket{\phi_1},...,\ket{\phi_\ell})$, where \allowbreak
    $\ket{\psi_i} = \sum_{x} \alpha^i_x \ket{x}\ket{\psi_{k,x}}$,
    $\ket{\phi_i} = \sum_{x} \alpha^i_x \ket{x}\ket{\phi_{x}}$ and $\{\ket{\phi_x}\}_x$ are Haar
    random states and the states $\ket{\sigma_i} = \sum_{x} \alpha^i_x \ket{x}$ are chosen by
    the adversary.
\end{enumerate}
Recently, \cite{BBSS23} extended this notion to pseudo-random function-like states with proof of destruction, where we have two algorithms $\del$ and $\ver$, which allows us to verify if a copy of the PRFS was deleted.

We will discuss now how to provide the one-shot security\footnote{Meaning that
one can only encrypt once using $\ket{\qpk}$.} of the encryption of a one-bit
message and we discuss later how to use it to achieve general security.

The quantum public-key in this simplified case is
\begin{align}\label{eq:public-key-prfspd}
  \frac{1}{\sqrt{2^\secpar}} \sum_{x \in \{0,1\}^\secpar}\ket{x}\ket{\psi_{\dk,x}}. 
\end{align}

The encryptor will then measure the first register of $\ket{\qpk}$ and the
post-measurement state is $\ket{x^*}\ket{\psi_{\dk,x^*}}$. The encryptor will
then generate a proof of deletion $\pi = \del(\ket{\psi_{\dk,x^*}})$. The
encryption chooses $r \in \{0,1\}^\lambda$ uniformly at random and
compute the ciphertext $c = (x^*,y)$ where $y =  \begin{cases}r, & \text{if } b = 0 \\ \pi,& \text{if
} b = 1\end{cases}$.

The decryptor will receive some value $(\hat{x},\hat{y})$ and
decrypt the message $\hat{b} =  \ver(\dk,\hat{x}, \hat{y})$.

The proof of the security of such a scheme closely follows the proof of our
first scheme.

Notice that repeating such a process in parallel trivially gives a one-shot
security of the encryption of a string $m$ and moreover, such an encryption is
classical. Therefore, in order to achieve IND-CPA-EO secure qPKE scheme, we can actually
encrypt a secret key $\sk$ that is chosen by the encryptor, and send the message
encrypted under $\sk$. We leave the details of such a construction and its proof
of security to~\cref{sect:qpke-from-prfspd}.

\paragraph{QPKE from PRFSs.} Finally, our third scheme uses the public-key
\begin{align}\label{eq:public-key-prfs}
  \frac{1}{\sqrt{2^\secpar}} \sum_{x \in \{0,1\}^\secpar}\ket{x}\ket{\psi_{\dk,x}},
\end{align}
where $\{\ket{\psi_{k,x}}\}_{k,x}$ is a PRFS family and the size of the input $x$ is super-logarithmic on the security parameter.

The encryptor will then measure the first register of $\ket{\qpk}$ and the
post-measurement state is $\ket{x^*}\ket{\psi_{\dk,x^*}}$. The encryptor will
then compute the ciphertext $c = (x^*,\rho)$ where
\begin{align}
\rho =  \begin{cases} I / 2^{n}, & \text{if } b = 0 \\ \ketbra{\psi_{\dk,x^*}},& \text{if } b = 1\end{cases}.
\end{align}
The decryptor, on ciphertext $c = (\hat{x},\hat{\rho})$, sets $\hat{b} \gets \Test(\dk, \hat{x}, \hat{\rho})$, where $\Test$ is a tester algorithm checking whether $\hat{\rho}$ is the output of the PRFS generator with key $\dk$ and input $\hat{x}$, that is whether $\hat{\rho} = \ketbra{\psi_{\dk,\hat{x}}}$.

The proof of the security of such a scheme also closely follows the one of our first scheme, and we give the formal construction in~\cref{sect:qpke-from-prfs}.

\subsection{Related works} % (fold)
\label{sub:related_works}
Gottesman~\cite{GottesmanConstruction} \agnote{make sure that this is correct} has a candidate construction (without formal security analysis) encryption scheme with quantum public keys and quantum ciphers, which consumes the public key for encryption.
Ref.~\cite{OTU00} defines and constructs a public-key encryption where the keys, plaintexts and ciphers are classical, but the algorithms are quantum. (In their construction, only the key-generation uses Shor's algorithm.) 

In ~\cite{NI09}, the authors define and provide impossibility results regarding encryption with quantum public keys. Classically, it is easy to show that a (public) encryption scheme cannot have deterministic ciphers; in other words, encryption must use randomness. They show that this is also true for a quantum encryption scheme with quantum public keys.

In Ref.~\cite{MY22a,MY22b}, the authors study digital signatures with quantum signatures, and more importantly in the context of this work, quantum public keys. 

\subsection{Concurrent and independent work}

Very recently, two concurrent and independent works have achieved similar tasks. Coladangelo \cite{Col23} shows a qPKE scheme whose construction is very different from ours, and uses a quantum trapdoor function, which is a new notion first introduced in their work. The hardness assumption is the existence of post-quantum OWF. Each quantum public key can be used to encrypt a single message (compared to our construction from OWF, where the public key can be used to encrypt multiple messages). The ciphertexts are quantum (whereas our construction from OWF has classical ciphertexts).  Barooti, Malavolta and Walter~\cite{BMW23} also construct a qPKE scheme based on OWF.  Their construction has quantum public keys and classical ciphertexts, and is very similar to the construction we propose in \cref{sect:qpke-from-owf}. They do not discuss the notion of IND-CPA-EO security, but we believe that the hybrid encryption approach we use to achieve IND-CPA-EO can be used in their construction as well.  Moreover, their construction also achieves CCA security, which is stronger than CPA security.
They leave open the question of constructing qPKE from weaker assumptions than OWF, which we answer affirmatively in our work.

\section{Definitions and Preliminaries}
\label{sec:definitions}
\subsection{Notation}
Throughout this paper, \(\secpar\) denotes the security parameter.
The notation \(\negl\) denotes any function \(f\) such that \(f(\lambda) = \lambda^{-\omega(1)}\), and
\(\poly\) denotes any function \(f\) such that \(f(\lambda) = \mathcal{O}(\lambda^c)\) for some
\(c > 0\).
When sampling uniformly at random a value \(a\) from a set~\(\mathcal{U}\), we
employ the notation \(a \sample \mathcal{U}\).
When sampling a value \(a\) from a probabilistic algorithm \(\adv\), we employ the
notation \(a \leftarrow \adv\).
Let \(\size{\cdot}\) denote either the length of a string, or the cardinal of a finite
set, or the absolute value.
By \(\ppt\) we mean a polynomial-time non-uniform family of probabilistic
circuits, and by \(\qpt\) we mean a polynomial-time family of quantum circuits.

\subsection{Pseudorandom Function-Like State (PRFS) Generators}

The notion of pseudorandom states were first introduced by Ananth, Qian and Yuen
in~\cite{AQY21}.
A stronger definition where the adversary is allowed to make superposition queries to the challenge oracles was introduced in the follow-up work~\cite{AGQY22}. 
We reproduce their definition here:

\begin{definition}[Quantum-accessible PRFS generator]
  \label{def:prfs}
We say that a $\qpt$ algorithm $G$ is a quantum-accessible secure pseudorandom function-like state generator if for all $\qpt$ (non-uniform) distinguishers $A$ if there exists a negligible function $\epsilon$, such that for all $\lambda$, the following holds:
 \[
    \left|\Pr_{k \leftarrow \{0,1\}^{\secparam} }\left[A_\lambda^{\ket{{\cal O}_{\sf PRFS}(k,\cdot)}}(\rho_\lambda) = 1\right] - \Pr_{{\cal O}_{\sf Haar}}\left[A_\lambda^{\ket{{\cal O}_{\sf Haar}(\cdot)}}(\rho_\lambda) = 1\right]\right| \le \epsilon(\lambda),
  \]
  where:
\begin{itemize}
    \item ${\cal O}_{\sf PRFS}(k,\cdot)$, %modeled as a channel, 
    on input a $d$-qubit register ${\bf X}$, does the following: it  applies an isometry channel that controlled on the register ${\bf X}$ containing $x$, it creates and stores  $G_{\secparam}(k,x)$ in a new register ${\bf Y}$.  It outputs the state on the registers ${\bf X}$ and ${\bf Y}$. 
    \item ${\cal O}_{\sf Haar}(\cdot)$, modeled as a  channel, on input a $d$-qubit register ${\bf X}$, does the following: it applies a channel that controlled on the register ${\bf X}$ containing $x$, stores $\ketbra{\vartheta_x}$ in a new register ${\bf Y}$, where $\ket{\vartheta_x}$ is sampled from the Haar distribution. It outputs the state on the registers ${\bf X}$ and ${\bf Y}$. 
\end{itemize}
Moreover, $A_{\secparam}$ has superposition access to ${\cal O}_{\sf PRFS}(k,\cdot)$ and ${\cal O}_{\sf Haar}(\cdot)$ (denoted using the ket notation).
\par We say that $G$ is a $(d(\lambda),n(\lambda))$-QAPRFS generator to succinctly indicate that its input length is $d(\lambda)$ and its output length is $n(\lambda)$.
\end{definition}
Given a state \(\rho\), it is useful to know whether it is the output of a PRFS
generator with key \(k\) and input \(x\).
The following lemma shows the existence of a tester algorithm to test any PRFS
states in a semi-black-box way.

\begin{lemma}{{\cite[Lemma 3.10]{AQY21}}}
  Let $G$ be a $(d, n)$-PRFS generator.
  There exists a \(\qpt\) algorithm $\Test(k, x, \cdot)$, called the tester
  algorithm for $G(k, x)$, such that here exists a negligible function $\nu(\cdot)$
  such that for all $\lambda$, for all $x \neq y$,
  \[
    \Pr_{k}[\Test(k, x, G(k, x)) = 1] \ge 1 - \nu(\lambda),
  \]
  and
  \[
    \Pr_{k}[\Test(k, x, G(k, y)) = 1] \le 2^{-n(\lambda)} + \nu(\lambda).
  \]
\end{lemma}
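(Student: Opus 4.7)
The plan is to build $\Test$ by inverting the generator's circuit, and to reduce the soundness bound to PRFS security via a single SWAP test.

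Since $G$ is a \qpt generator of pure states, I may assume $\ket{\psi_{k,x}} = U_{k,x}\ket{0^m}$ for an efficient unitary $U_{k,x}$ (absorbing any internal randomness and ancillas into the computation). Define $\Test(k,x,\rho)$ by applying $U_{k,x}^\dagger$ to $\rho$ tensored with fresh $\ket{0}$ ancillas on the extra qubits, measuring in the computational basis, and outputting $1$ iff every qubit reads $0$. This is exactly the projective measurement $\{\ketbra{\psi_{k,x}},\, I-\ketbra{\psi_{k,x}}\}$. Completeness is then immediate: if $\rho = \ketbra{\psi_{k,x}}$, the test accepts with probability $1$, so $\Pr_k[\Test(k,x,G(k,x))=1]=1 \ge 1-\nu(\lambda)$.

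For soundness, on input $G(k,y)$ the test accepts with probability $p_k := |\langle\psi_{k,x}|\psi_{k,y}\rangle|^2$, so I need $\E_k[p_k] \le 2^{-n(\lambda)}+\nu(\lambda)$ for a negligible $\nu$. Suppose toward a contradiction that $\E_k[p_k] > 2^{-n}+\epsilon(\lambda)$ for some non-negligible $\epsilon$. I build a \qpt distinguisher $A$ against PRFS security that queries its oracle on the fixed classical inputs $x$ and $y$ to obtain two output registers holding $\ket{\phi_x}$ and $\ket{\phi_y}$, performs a SWAP test between them, and outputs $1$ iff the SWAP test accepts. Its acceptance probability equals $(1+\E[|\langle\phi_x|\phi_y\rangle|^2])/2$: in the PRFS world this is $(1+\E_k[p_k])/2 > (1+2^{-n}+\epsilon)/2$, while in the Haar world it is exactly $(1+2^{-n})/2$ since independent Haar-random pure states in dimension $2^n$ satisfy $\E[|\langle\vartheta_x|\vartheta_y\rangle|^2]=2^{-n}$. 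The distinguishing advantage is $\epsilon/2$, contradicting PRFS security; hence $\E_k[p_k]\le 2^{-n}+\nu(\lambda)$ for some negligible $\nu$, as required.

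The hard part is really the syntactic mismatch that $\Test$ needs the key $k$ in order to implement $U_{k,x}^\dagger$, whereas the adversary in the PRFS security game does not have $k$. I sidestep this by using a key-free SWAP test inside the reduction instead of the $U_{k,x}^\dagger$-tester itself; the SWAP test extracts exactly the quantity $|\langle\psi_{k,x}|\psi_{k,y}\rangle|^2$ we need, using only two oracle queries and no knowledge of $k$. Beyond this, the argument reduces to the standard fact that the expected squared overlap of two independent Haar-random pure states in dimension $2^n$ is precisely $2^{-n}$, which makes the soundness bound tight up to negligible error.
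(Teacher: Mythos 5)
The paper does not actually prove this lemma---it is imported verbatim from \cite[Lemma 3.10]{AQY21}---but your argument is essentially the original one: realize $\Test(k,x,\cdot)$ as the projective measurement $\{\ketbra{\psi_{k,x}},\,I-\ketbra{\psi_{k,x}}\}$ by uncomputing the generation circuit, and bound $\E_k\left[|\braket{\psi_{k,x}}{\psi_{k,y}}|^2\right]$ by $2^{-n(\lambda)}+\negl$ via a key-free SWAP-test reduction to pseudorandomness against the Haar value $2^{-n(\lambda)}$. The one clause to make honest is ``absorbing any ancillas'': if the generation unitary leaves garbage $\ket{g_{k,x}}\neq\ket{0}$ on its work register, then tensoring fresh $\ket{0}$'s and applying $U_{k,x}^{\dagger}$ does \emph{not} implement that projector (completeness degrades to $|\braket{g_{k,x}}{0}|^2$), so you should either invoke the isometry form of the generator---which forces the garbage to be input-independent and hence regenerable and uncomputable by $\Test$, which holds the key---or say explicitly that this is where the $1-\nu(\lambda)$ slack in completeness is spent.
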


\subsection{Quantum Pseudorandomness with Proofs of Destruction}
The rest of this section is taken verbatim from~\cite{BBSS23}.
\begin{game}
\caption{$\clon{\adv}{\PRSPD}$ %$\exCerDel_{\adv,\Pi}(\secpar)$:
}
\begin{algorithmic}[1]
    \State Given input $1^\secpar$, Challenger samples $k\gets \{0,1\}^{w(\secpar)}$ uniformly at random.
    \State $\adv$ sends $m$ to the challenger.
    \State Challenger runs $\gen(k)^{\tensor m}$ and sends $\ket{\psi_k}^{\tensor m}$ to $\adv$.
    \State $\adv$ gets classical oracle access to $\ver(k,\cdot)$.
    \State $\adv$ outputs $c_1,c_2,\ldots,c_{m+1}$ to the challenger.
    \State Challenger rejects if $c_i$'s are not distinct.
    \For{$i\in[m+1]$}
    Challenger runs $b_i\gets \ver(k,c_i)$
    \EndFor
    \State Return $\wedge_{i=1}^{m+1} b_i$.
\end{algorithmic}
\label{game:cloning-prspd}
\end{game}

\begin{definition}[\prfspd]
  A $\PRFSPD$ scheme with key-length $w(\secpar)$, input-length $d(\secpar)$, output length $n(\secpar)$ and proof length $c(\secpar)$ is a tuple of  $\qpt$ algorithms $\gen,\del,\ver$ with the following syntax:
  \begin{enumerate}
       \item $\ket{\psi^x_k}\gets \gen(k,x)$: takes a key $k\in \{0,1\}^w$, an input string $x\in \{0,1\}^{d(\secpar)}$, and outputs an $n$-qubit pure state 
        $\ket{\psi^x_k}$.
      \item $p\gets \del(\ket{\phi})$: takes an $n$-qubit quantum state $\ket{\phi}$ as input, and outputs a $c$-bit classical string, $p$.
      \item $b\gets \ver(k,x,q)$: takes a key $k\in \{ 0,1 \}^w$, a $d$-bit input string $x$, a $c$-bit classical string $p$ and outputs a boolean output $b$.
  \end{enumerate}

  \paragraph*{Correctness.} A $\PRFSPD$ scheme is said to be correct if for every $x\in \{0,1\}^{d}$,
  \[\Pr_{k\uniform{w}}[1\gets\ver(k,x,p)\mid p\gets \del(\ket{\psi^x_k}); \ket{\psi^x_k}\gets \gen(k,x)]=1\]
  \paragraph*{Security.}
  \begin{enumerate}
      \item \textsf{Pseudorandomness:} A $\PRFSPD$ scheme is said to be (adaptively) pseudorandom if for any $\qpt$ adversary $\adv$, and any polynomial $m(\secpar)$, there exists a negligible function $\negl$, such that
      \[\left|\Pr_{ k\gets\{0,1\}^w}[\adv^{\gen(k,\cdot)}(1^\secpar)=1]-\Pr_{ \forall x\in \{0,1\}^d, \ket{\phi^x}\gets\mu_{(\CC^2)^{\tensor n}}}[\adv^{\haar^{\{\ket{\phi^x}\}_{x\in \{0,1\}^d}}(\cdot)}(1^\secpar)=1]\right|=\negl,\]
      where $\forall x\in \{0,1\}^d$, $\haar^{\{\ket{\phi^x}\}_{x\in \{0,1\}^d}}(x)$ outputs $\ket{\phi^x}$. Here $\adv^{\gen(k,\cdot)}$ represents that $\adv$ gets classical oracle access to $\gen(k,\cdot)$.\label{item:pseudorandomness}
      \item $\secproof$: A $\PRFSPD$ scheme satisfies $\secproof$ if for any $\qpt$ adversary $\adv$ in cloning game (see Game~\ref{game:cloning-prfspd}), there exists a negligible function $\negl$ such that
      \[\Pr[\clon{\adv}{\PRFSPD}=1]=\negl.\]
      \label{item:security of proof of independence}

\begin{game}
\caption{$\clon{\adv}{\PRFSPD}$ %$\exCerDel_{\adv,\Pi}(\secpar)$:
}
\begin{algorithmic}[1]
    \State Given input $1^\secpar$, Challenger samples $k\gets \{0,1\}^{w(\secpar)}$ uniformly at random.
    \State Initialize an empty set of variables, $S$.
    \State $\adv$ gets oracle access to $\gen(k,\cdot)$,  $\ver(k,\cdot, \cdot)$ as oracle.
    \For{$\gen$ query $x$ made by $\adv$}
    \If{$\exists$ variable $t_x\in S$}
    $t_x=t_x+1$.
    \Else{ Create a variable $t_x$ in $S$, initialized to $1$.}
    \EndIf
    \EndFor
    \State $\adv$ outputs $x,c_1,c_2,\ldots,c_{t_x+1}$ to the challenger.
    \State Challenger rejects if $c_i$'s are not distinct.
    \For{$i\in[m+1]$}
    $b_i\gets \ver(k,x,c_i)$
    \EndFor
    \State Return $\wedge_{i=1}^{m+1} b_i$.
\end{algorithmic}
\label{game:cloning-prfspd}
\end{game}

\end{enumerate}

\label{definition:PRFSPD}
\end{definition}

\section{Security definitions for qPKE}
In this section, we introduce the new notion of encryption with quantum public keys (\cref{def:eqpk}) and present our indistinguishability under chosen-plaintext attacks security for quantum public-key encryption.

\begin{definition}[Encryption with quantum public keys]
\label{def:eqpk}
Encryption with quantum public keys (qPKE) consists of 4 algorithms with the following syntax:
\begin{enumerate}
    \item $\dk \gets \gen(\secparam)$: a $\ppt$ algorithm, which receives the security parameter and outputs a classical decryption key.
    \item $\ket{\qpk}\gets \qpkgen(\dk)$: a $\qpt$ algorithm, which receives a classical decryption key $\dk$, and outputs a quantum public key $\ket{\qpk}$.
    We require that the output is a pure state, and that $t$ calls to $\qpkgen(\dk)$ should yield the same state, that is, $\ket{\qpk}^{\tensor t} $.
    \item $(\qpk',\qc) \gets \qenc(\qpk,m)$: a $\qpt$ algorithm, which receives a quantum public key $\qpk$ and a plaintext $m$, and outputs a (possibly classical) ciphertext $\qc$ and a recycled public-key $\qpk'$.
    \item $m \gets \qdec(\dk, \qc)$: a $\qpt$ algorithm, which uses a decryption key $\dk$ and a ciphertext $\qc$, and outputs a classical plaintext $m$.
    In the case $\qc$ is classical, we consider $\qdec$ as a $\ppt$ algorithm.
\end{enumerate}
\end{definition}
We say that a qPKE scheme is complete if for every message $m \in  \{0,1\}^*$ and any security parameter $\secpar \in \NN$, the following holds:
\[
\Pr\left[
\qdec(\dk, \qc) = m
\ \middle\vert
\begin{array}{r}
\dk\gets \gen(\secparam) \\
\ket{\qpk} \gets \qpkgen(\dk) \\
(\qpk', \qc) \gets \qenc(\ket{\qpk}, m)
\end{array}
\right] \geq 1 -\negl,
\]
where the probability is taken over the randomness of $\gen$, $\qpkgen$ and $\qenc$.

Next, we present a quantum analogue of classical indistinguishability under chosen-plaintext attacks security (denoted as IND-CPA) for qPKE.
\begin{definition}
    A qPKE scheme is IND-CPA secure if for every $\qpt$ adversary, there exists a negligible function $\epsilon$ such that the probability of winning the IND-CPA security game (see Game \ref{game:cpa}) is at most $\frac{1}{2}+\epsilon(\secpar)$.
    \label{def:EQPK_cpa_security}
\end{definition}

\begin{game}
 \caption{IND-CPA security game for encryption with quantum public key schemes.} \label{game:cpa}
\begin{algorithmic}[1]
    \State  The challenger generates $\dk \gets \gen(\secparam)$.
    \State The adversary gets $\secparam$ as an input, and oracle access to $\qpkgen(\dk)$, and sends $m_0,m_1$ of the same length to the challenger.
    \State The challenger samples $b\in_R \{0,1\}$, generates $\ket{\qpk}\gets \qpkgen(\dk)$ and sends $c\gets \qenc(\ket{\qpk},m_b)$ to the adversary.
    \State The adversary outputs a bit $b'$.
\end{algorithmic}
We say that the adversary wins the game (or alternatively, that the outcome of the game is 1) iff $b=b'$.
 \end{game}
Note that this is the standard CPA-security game of a public-key encryption scheme, with the exception that the adversary can receive polynomially many copies of $\ket{\qpk}$, by making several calls to the $\qpkgen(\dk)$ oracle.

In  the classical setting, there is no need to provide access to an encryption oracle since the adversary can use the public key to apply the encryption herself. In the quantum setting, this is not the case: as we will see, the quantum public key might be measured, and the ciphertexts might depend on the measurement outcome.
This motivates a stronger definition in which the adversary gets oracle access to the encryption, denoted as IND-CPA-EO security.

\begin{definition}
\label{def:single-challenge-cpa-eo-security}
A qPKE scheme is (single-challenge) IND-CPA-EO secure if for every $\qpt$ adversary, there exists a negligible function $\epsilon$ such that the probability of winning the IND-CPA-EO security game (see Game \ref{game:single-challenge-cpa-eo}) is at most $\frac{1}{2}+\epsilon(\secpar)$.
\end{definition}
\begin{game}
 \caption{(Single-challenge) Chosen plaintext attack with an encryption oracle (IND-CPA-EO) security game for encryption with quantum public key schemes.} \label{game:single-challenge-cpa-eo}
\begin{algorithmic}[1]
  \State The challenger generates $\dk \gets \gen(\secparam)$.
  \State The adversary gets $\secparam$ as an input, and oracle access to
  $\qpkgen(\dk)$.
  \State The challenger generates $\ket{\qpk}\gets \qpkgen(\dk)$.
  Let
  \(\qpk_{1} \coloneqq \ket{\qpk}\).
  \State For $i=1,\ldots,\ell$, the adversary creates a classical message $m_i$ and send it to the challenger.
  \State The challenger computes $(\qc_i,\qpk_{i+1}) \gets \qenc(\qpk_i,m_i)$ and send $\qc_i$ to the adversary.
  \State The adversary sends two messages $m'_0,m'_1$ of the same length to the challenger.
  \State The challenger samples $b\in_R \{0,1\}$, computes
  $(\qc^{*}, \qpk_{l+2}) \gets \qenc(\qpk_{\ell+1},m'_b)$ and sends \(\qc^{*}\) to the adversary.
  \State For $i=\ell+2,\ldots,\ell'$, the adversary creates a classical message $m_i$ and send it to the challenger.
  \State The challenger computes $(\qc_i,\qpk_{i+1}) \gets \qenc(\qpk_i,m_i)$ and send $\qc_i$ to the adversary.
  \State The adversary outputs a bit $b'$.
\end{algorithmic}
We say that the adversary wins the game (or alternatively, that the outcome of
the game is 1) iff $b=b'$.
\end{game}

\begin{definition}
\label{def:many-challenge-cpa-eo-security}
A qPKE scheme is (multi-challenge) IND-CPA-EO secure if for every $\qpt$ adversary, there exists a negligible function $\epsilon$ such that the probability of winning the IND-CPA-EO security game (see Game \ref{game:many-challenge-cpa-eo}) is at most $\frac{1}{2}+\epsilon(\secpar)$.
\end{definition}
\begin{game}
 \caption{(Multi-challenge) Chosen plaintext attack with an encryption oracle (IND-CPA-EO) security game for encryption with quantum public key schemes.}
 \label{game:many-challenge-cpa-eo}
\begin{algorithmic}[1]
  \State The challenger generates $\dk \gets \gen(\secparam)$.
  \State The adversary gets $\secparam$ as an input, and oracle access to
  $\qpkgen(\dk)$.
  \State The challenger generates $\ket{\qpk}\gets \qpkgen(\dk)$.
  Let
  \(\qpk_{1} \coloneqq \ket{\qpk}\).
  \State For $i=1,\ldots,\ell$, the adversary creates a classical message $m_i$ and send it to the challenger.
  \State The challenger computes $(\qc_i,\qpk_{i+1}) \gets \qenc(\qpk_i,m_i)$ and send $\qc_i$ to the adversary.
  \State The adversary sends two messages $m'_0,m'_1$ of the same length to the challenger.
  \State The challenger samples $b\in_R \{0,1\}$, computes
  $(\qc^{*}, \qpk_{l+2}) \gets \qenc(\qpk_{\ell+1},m'_b)$ and sends \(\qc^{*}\) to the adversary.
  \State For $i=\ell+2,\ldots,\ell'$, the adversary creates a classical message $m_i$ and send it to the challenger.
  \State The challenger computes $(\qc_i,\qpk_{i+1}) \gets \qenc(\qpk_i,m_i)$ and
  send $\qc_i$ to the adversary.
  \State The challenger and the adversary can repeat step 3 - 9 polynomially
  many times.
  For the \(i\)-th repetition, let \(\qpk_{1, i}\) be
  \(\qpk_{l'+1, i-1}\).
  \State The challenger and the adversary can repeat step 3 - 10 polynomially
  many times.
  \State The adversary outputs a bit $b'$.
\end{algorithmic}
We say that the adversary wins the game (or alternatively, that the outcome of
the game is 1) iff $b=b'$.
\end{game}

\begin{theorem}
Single-challenge security (\cref{def:single-challenge-cpa-eo-security}) implies multi-challenge security (\cref{def:many-challenge-cpa-eo-security}).
\end{theorem}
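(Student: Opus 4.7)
The plan is to run the standard hybrid argument that lifts single-challenge indistinguishability to multi-challenge indistinguishability, carefully adapted to the fact that our public keys are quantum states that evolve through each call to $\qenc$. Fix a $\qpt$ multi-challenge adversary $\adv$ and let $q = q(\secpar)$ be a polynomial upper bound on the number of challenge queries it issues across all repetitions of steps 3--9 of \cref{game:many-challenge-cpa-eo}. I will define hybrid distributions $H_0, H_1, \ldots, H_q$ in which the first $j$ challenge queries are always answered by encrypting the left message $m'_0$ and the remaining $q-j$ are always answered by encrypting the right message $m'_1$. All non-challenge calls to $\qenc$ and all $\qpkgen(\dk)$ queries are answered honestly in every hybrid. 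Then $H_0$ equals the multi-challenge experiment conditioned on $b=1$ throughout, and $H_q$ equals it conditioned on $b=0$ throughout, so an $\adv$ that wins with advantage $\epsilon$ must, by the triangle inequality, distinguish some adjacent pair $(H_{j^{*}}, H_{j^{*}+1})$ with advantage at least $\epsilon/q$.

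From such a distinguisher I would build a single-challenge adversary $\adv'$ for \cref{game:single-challenge-cpa-eo}. The reduction $\adv'$ first samples $j^{*} \sample \{0,\ldots,q-1\}$, then simulates $\adv$ as follows: every $\qpkgen(\dk)$ query from $\adv$ is forwarded to $\adv'$'s own $\qpkgen(\dk)$ oracle; every non-challenge encryption query is forwarded to $\adv'$'s encryption oracle; for the $j$-th challenge query $(m'_0, m'_1)$ with $j < j^{*}+1$, $\adv'$ sends $m'_0$ to its encryption oracle and returns the resulting ciphertext; for $j = j^{*}+1$, $\adv'$ forwards $(m'_0, m'_1)$ as its own challenge and relays the response $\qc^{*}$ to $\adv$; and for $j > j^{*}+1$, $\adv'$ sends $m'_1$ to its encryption oracle and returns the result. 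When $\adv$ outputs $b'$, so does $\adv'$. Because in the single-challenge game the challenger itself maintains the evolving state $\qpk_{i}$ and updates it through each oracle call, $\adv'$'s view of $\adv$ is distributed exactly as $H_{j^{*}}$ when the secret bit is $0$ and exactly as $H_{j^{*}+1}$ when it is $1$. Averaging over the guess $j^{*}$ gives $\adv'$ advantage at least $\epsilon/q^{2}$, which contradicts single-challenge IND-CPA-EO security since $q$ is polynomial.

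The main obstacle, and the reason this argument requires care in the quantum setting rather than being a purely classical hybrid, is the bookkeeping of the recycled public key $\qpk_{i}$ across hybrids. Because $\qenc$ may measure $\qpk_{i}$ and return a disturbed state $\qpk_{i+1}$, the distribution of later ciphertexts depends on all earlier encryption choices, so one cannot simply swap ciphertexts in isolation. This is precisely the reason the single-challenge game was defined with an encryption oracle both \emph{before} and \emph{after} the challenge: the reduction needs that oracle to continue feeding encryptions into $\adv$'s simulation after the embedded challenge, and it needs it before the challenge to reproduce the state $\qpk_{\ell+1}$ that the multi-challenge experiment would have reached by that point. Provided the single-challenge game permits a polynomial number of pre- and post-challenge oracle calls (which it does, as $\ell$ and $\ell'$ are chosen by the adversary), the simulation is perfect and the hybrid argument goes through. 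A small additional point is that multiple repetitions of steps 3--9 only introduce additional challenge indices into the same global counter $q$, so no separate treatment is needed for repetitions beyond the one containing $j^{*}+1$.
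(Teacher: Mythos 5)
Your proposal is correct and is precisely the argument the paper has in mind: the paper's own proof is a one-line appeal to the classical single-to-multi-challenge hybrid reduction (Katz--Lindell, Theorem 3.24), asserting it carries over verbatim, and your write-up simply makes explicit why it does --- namely that the single-challenge game's pre- and post-challenge encryption oracle lets the reduction thread the evolving $\qpk_i$ state correctly. (Two cosmetic points: your identification of which value of $b$ yields $H_{j^*}$ versus $H_{j^*+1}$ is swapped, and the telescoping sum actually gives advantage $\epsilon/q$ rather than $\epsilon/q^2$; neither affects the conclusion.)
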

\begin{proof}
  Recall that classically, single-challenge IND-CPA implies multi-challenge IND-CPA, see, e.g.,~\cite[Theorem 3.24]{KL14}. Following the exact same argument works in our case as well: single-challenge IND-CPA-EO implies multi-challenge IND-CPA-EO.
\end{proof}

\section{Constructions}
\subsection{QPKE with Classical ciphertext from OWFs}
\label{sect:qpke-from-owf}
We show a construction based on the existence of post-quantum one-way functions. Recalls that post-quantum one-way functions imply a qPRF~\cite{Zha12}, and IND-CPA\footnote{In fact, this can be strengthened to IND-qCCA security, but this is unnecessary for our application.} symmetric encryption~\cite{BZ13}.
\begin{figure}[phtp]
%\framebox{\parbox{\hsize}{
\noindent\textbf{Assumes:} A PRF family $\{f_k\}_{k}$, and a symmetric encryption scheme $\{\enc,\dec\}$.\\
\noindent$\gen(\secparam)$
\begin{compactenum}
\item $\dk \gets_R \{0,1\}^\secpar$.
\end{compactenum}

\noindent $\qpkgen(\dk) $
\begin{compactenum}
\item Output $ \ket{\qpk}=\frac{1}{\sqrt{2^\secpar}} \sum_{x \in \{0,1\}^\secpar}\ket{x}\ket{f_\dk(x)}$.
\end{compactenum}

\noindent $\qenc(\ket{\qpk},m)$
\begin{compactenum}
\item Measure both registers of $\ket{\qpk}$ in the standard basis. Denote the result as $x$ and $y$.
\item Output $c=(x,\enc(y, m))$.
\end{compactenum}

\noindent $\dec_\dk(c)$
\begin{compactenum}
\item Interpret $c$ as $(x,z)$
\item Set $y:=f_\dk(x)$.
\item Output $m=\dec(y, z)$.
\end{compactenum}

\caption{An encryption scheme with quantum public keys.}
\label{fig:public_key_encryption_from_prf}
\end{figure}

\begin{theorem}\label{lem:security-prf-scheme}
 Assuming the existence of quantum-secure PRF family $\{f_k\}_k$ and post-quantum IND-CPA symmetric-key encryption scheme $(\enc,\dec)$, any QPT adversary $\adv$ wins the IND-CPA-EO game for the scheme presented in Figure~\ref{fig:public_key_encryption_from_prf} with advantage at most $\negl$.
\end{theorem}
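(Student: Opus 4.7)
The plan is to reduce the IND-CPA-EO security of the scheme to the security of the PRF and the IND-CPA security of the underlying symmetric-key encryption by a sequence of hybrid games. A useful first observation is that because $\qenc$ measures the public key in the computational basis, every ciphertext produced by the challenger during the game---the oracle encryptions before the challenge, the challenge ciphertext itself, and the oracle encryptions afterwards---is computed under the \emph{same} symmetric key $y^* = f_\dk(x^*)$, where $(x^*, y^*)$ is the result of the very first measurement. So it suffices to argue that $y^*$ is indistinguishable from a fresh uniform symmetric key, at which point I can invoke IND-CPA security of $\enc$.

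The first hybrid $H_1$ replaces the PRF $f_\dk$ with a uniformly random function $R$, so that each public-key copy becomes $\frac{1}{\sqrt{2^\secpar}}\sum_x \ket{x}\ket{R(x)}$ and the encryption key becomes $R(x^*)$. This step is a direct reduction to quantum-accessible PRF security: a distinguisher with oracle access to either $f_\dk$ or $R$ produces each public-key copy with a single quantum query to its oracle, so the adversary's $t = \poly(\secpar)$ $\qpkgen$-queries plus the challenger's single measurement-copy can all be simulated with $t+1$ oracle queries. The second hybrid $H_2$ replaces the encryption key $R(x^*)$ by a fresh uniformly random symmetric key $k^*$ independent of $R$. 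Once this is done, the only part of the adversary's view that depends on $k^*$ is the collection of ciphertexts $\{(x^*, \enc(k^*, m_i))\}_i$ together with the challenge $(x^*, \enc(k^*, m'_b))$, so the remaining advantage is bounded by multi-message IND-CPA security of $\enc$, which follows from single-message IND-CPA by the standard hybrid over message indices.

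The hard part is the step $H_1 \to H_2$: I have to show that the $t+1$ copies of $\ket{\qpk}_R$ (one of which is measured by the challenger to produce $x^*$) give the adversary essentially no information about $R(x^*)$, even though $R(x^*)$ appears in superposition inside every copy. My approach is a reprogramming/swap argument. Because measuring $\ket{\qpk}_R$ in the standard basis yields a uniformly random $x^*$ whose marginal is independent of $R$, I can think of sampling $x^*$, the restriction $R^{-x^*}$ of $R$ to the other inputs, and the value $r = R(x^*)$ independently. A direct calculation gives
\[
\braket{\qpk_{R^{-x^*},\, r}}{\qpk_{R^{-x^*},\, r'}} \;=\; 1 - \frac{1 - [r = r']}{2^\secpar} \;\ge\; 1 - 2^{-\secpar}
\]
for any two choices $r, r'$, so the trace distance between the corresponding public keys is $O(2^{-\secpar/2})$, and between their $(t+1)$-fold tensor copies at most $O(t \cdot 2^{-\secpar/2}) = \negl$. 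A standard renaming argument---sample $R$ and $k^*$ independently, define the reprogrammed $R'$ agreeing with $R$ except $R'(x^*) := k^*$, and observe that $R'$ is itself uniformly distributed---then upgrades this bound on changing the value inside the public-key copies into a bound on replacing the encryption key $R(x^*)$ with an independent uniform $k^*$, completing the transition. Chaining the three indistinguishability bounds yields the claimed negligible advantage.
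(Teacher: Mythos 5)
Your proposal is correct and reaches the theorem by the same three pillars as the paper---an information-theoretic step isolating the value at $x^*$, a reduction to quantum-accessible PRF security, and a reduction to IND-CPA of the symmetric scheme---but it implements the central information-theoretic step differently and in a different order. The paper first defers the challenger's measurement, then \emph{punctures} the public key, replacing $\ket{\qpk}$ by $\ket{\qpk'}=\frac{1}{\sqrt{2^\secpar-1}}\sum_{x\neq x^*}\ket{x}\ket{f_\dk(x)}$ via a trace-distance bound on $\ket{\qpk}^{\otimes p}\otimes\ket{x^*}$ versus $\ket{\qpk'}^{\otimes p}\otimes\ket{x^*}$; only afterwards does it switch $f_\dk$ to a random function $H$, at which point replacing $H(x^*)$ by a fresh uniform $z$ is \emph{perfectly} indistinguishable because $x^*$ no longer appears in any copy of the key. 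You instead switch to a random function first and then keep the full (unpunctured) public key, arguing directly via the overlap computation $\braket{\qpk_{R^{-x^*},r}}{\qpk_{R^{-x^*},r'}}\geq 1-2^{-\secpar}$ and a renaming of the independent uniform values $r,r'$ that the copies leak negligibly about $R(x^*)$. Both arguments hinge on the same fact (the amplitude on $x^*$ is $2^{-\secpar/2}$), and your trace-distance and tensor-power bounds are correct; the paper's puncturing buys an exactly-identical final hybrid at the cost of having to argue the punctured state is efficiently preparable in the last reduction, whereas your route keeps the honest public key throughout but pushes all the slack into a single statistical step. Both correctly require quantum (superposition) oracle access in the PRF reduction, which the hypothesis of a quantum-secure PRF provides.
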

\begin{proof}
In order to prove our results, we define the following Hybrids.

\noindent\textbf{Hybrid $H_0$.} The original security game as defined in~\cref{game:single-challenge-cpa-eo}.

\noindent\textbf{Hybrid $H_1$.} Same as Hybrid $H_0$, except that the challenger, instead of measuring $\ket{\qpk}$ when the adversary queries the encryption oracle for the first time, the challenger measures this state before providing the copies of $\ket{\qpk}$ to the adversary.
Note that by measuring $\ket{\qpk}$ in the computational basis, the challenger
would obtain a uniformly random string $x^*$ (and the corresponding
$f_\dk(x^*)$).

\noindent\textbf{Hybrid $H_2$.} Same as Hybrid $H_1$, except that the challenger
samples $x^*$ as in the previous hybrid, and instead of providing $\ket{\qpk}$
to the adversary, she provides
\[\ket{\qpk'} = \frac{1}{\sqrt{2^\lambda -1}}\sum_{x \in \{0,1\}^*, x \ne x^*}  \ket{x}\ket{f_{\dk}(x)}.\]
Moreover, the challenger provides ciphertexts $(x^*,\enc(f_{\dk}(x^*), m))$ for the chosen messages $m$.
We note that this state $\ket{\qpk'}$ can be efficiently prepared by computing
the functions $\delta_{x, x^*}$ over the state $\sum_x \ket{x}$ in superposition and
measuring the output register.
With overwhelming probability, the post-measurement state is
$\sum_{x \neq x^*}\ket{x}$.

\noindent\textbf{Hybrid $H_3$.}
Same as Hybrid $H_2$, except the challenger uses a random function \(H\) in
place of \(f_{\dk}\), and provides
$\ket{\qpk'} = \frac{1}{\sqrt{2^\lambda -1}}\sum_{x \in \{0,1\}^*, x \ne x^*} \ket{x}\ket{H(x)}$.
Moreover, for each encryption query, the challenger uses \(H(x^{*})\) as
\(\qpk_{i}\), and answers the query with $(x^*, \enc(H(x^{*}), m))$
for the chosen message $m$.

\noindent\textbf{Hybrid $H_4$.}
Same as Hybrid $H_3$, except the challenger samples uniformly at random a string
\(z\), and uses \(z\) as \(\qpk_{i}\).
The answer to each encryption query is now $(x^*, \enc(z, m))$ for the chosen
message $m$.

We will now show that, given our assumptions, Hybrids $H_i$ and $H_{i+1}$ are indistinguishable except with probability at most $\negl$ and that every polynomial-time adversary wins $H_4$ with advantage at most $\negl$. We can then use triangle inequality to prove the security of \Cref{lem:security-prf-scheme}.

\begin{lemma}\label{lem:hybrid1_prf}
No adversary can distinguish Hybrid $H_0$ and  Hybrid $H_1$ with non-zero advantage.
\end{lemma}
\begin{proof}
    Notice that the operations corresponding to the challenger's measurement of $\ket{\qpk}$ and the creation of the copies of $\ket{\qpk}$ given to the adversary commute. In this case, we can swap the order of these operations and the outcome is exactly the same.
\end{proof}

\begin{lemma}\label{lem:hybrid2_prf}
No adversary can distinguish Hybrid $H_1$ and Hybrid $H_2$ with non-negligible advantage.
\end{lemma}
\begin{proof}
  Notice that distinguishing the two adversaries imply that we can distinguish
  the following quantum states $\ket{\qpk}^{\otimes p} \otimes\ket{x^*}$ and
  $\ket{\qpk'}^{\otimes p} \otimes\ket{x^*}$, but these two quantum states have
  $1 - \negl$ trace-distance for any polynomial $p$.
  Therefore, this task can be performed with success at most $\negl$.
\end{proof}

\begin{lemma}\label{lem:hybrid3_prf}
No $\qpt$ adversary can distinguish Hybrid $H_2$ and  Hybrid $H_3$ with non-negligible advantage.
\end{lemma}
\begin{proof}
Suppose that there exists an adversary $\adv$ such that
$\Pr[F_1] - \Pr[F_2] \geq \frac{1}{p(\lambda)}$ for some polynomial $p$, where
$F_1$ is the event where $\adv$ outputs $1$ on $H_2$ and
$F_2$ is the event where $\adv$ outputs $1$ on $H_3$. Then, we show that we can construct an adversary $\adv'$ that can distinguish the PRF family from random.

$\adv'$ will behave as the challenger in the CPA-EO game and instead of computing $f_{\dk}$ to create $\ket{\qpk'}$ and answer the encryption queries, she queries the oracle $O$ (that is either a PRF or a random function). $\adv'$ then outputs $1$ iff $\adv$ outputs $1$.

Notice that if $O$ is a PRF, then the experiment is the same as Hybrid $H_2$. On the other hand, if $O$ is a random oracle, the experiment is the same as Hybrid $H_3$.

In this case, we have that
\begin{align}
&\Pr_{O \sim \mathcal{F}}[\adv'^{O}() = 1] - \Pr_{\dk}[\adv'^{f_{\dk}}() = 1] \\
&=
\Pr[F_1]
- \Pr[F_2] \\
&\geq \frac{1}{p(\lambda)}.
\end{align}
\end{proof}

\begin{lemma}\label{lem:hybrid4_prf}
  No adversary can distinguish Hybrid $H_3$ and Hybrid $H_4$ with non-zero
  advantage.
\end{lemma}
\begin{proof}
  Since the adversary never gets the evaluation of \(H(x^{*})\) (as \(x^{*}\)
  was punctured from all \(\ket{\qpk}\)), the distributions of the two hybrid are
  identical.
\end{proof}

\begin{lemma}\label{lem:security_hybrid3_prf}
 Any polynomially-bounded adversary wins the game in Hybrid $H_4$ with an advantage at most $\negl$.
\end{lemma}
\begin{proof}
  Suppose that there exists an adversary $\adv$ such that wins the game in
  Hybrid $H_4$ with advantage $\frac{1}{p(\lambda)}$ for some polynomial $p$.
  Then, we show that we can construct an adversary $\adv'$ that can break IND-CPA
  security of the symmetric-key encryption scheme with the same probability.

  $\adv'$ will simulate $\adv$ and for that, she picks $x^*$ and $z$, creates
  $\ket{\qpk'} = \frac{1}{\sqrt{2^\lambda -1}}\sum_{x \in \{0,1\}^*, x \ne x^*} \ket{x}\ket{H(x)}$
  using the compressed oracle technique~\cite{Zhandry19} and uses oracle provided by the IND-CPA game of the
  symmetric-key encryption scheme for answering the encryption oracles.
  $\adv'$ will output $1$ iff $\adv$ outputs $1$.
  We note that the encryption key \(z\) is sampled uniformly at random
  independently of all other variables.
  We have that the winning probability of $\adv'$ in the IND-CPA game is the
  same of $\adv$ in the IND-CPA-EO game.
\end{proof}
\end{proof}

\subsection{QPKE with Classical Ciphertexts from PRFSPDs}
\label{sect:qpke-from-prfspd}
In this section, we propose a construction for qPKE from pseudo-random function-like states with proof of destruction.

\begin{figure}[phtp]
%\framebox{\parbox{\hsize}{
\noindent\textbf{Assumes:} A PRFSPD family $\{\ket{\psi_{\dk,x}}\}_{\dk,x}$ and a quantum symmetric encryption scheme with classical ciphers $\{\enc,\dec\}$.\\
\noindent$\gen(\secparam)$
\begin{compactenum}
\item $\dk \gets_R \{0,1\}^\secpar$.
\end{compactenum}

\noindent $\qpkgen(\dk) $
\begin{compactenum}
\item Output $ \ket{\qpk}=\bigotimes_{i \in [\lambda]} \frac{1}{\sqrt{2^\secpar}} \sum_{x^{(i)} \in \{0,1\}^\secpar}\ket{x^{(i)}}\ket{\psi_{\dk,x^{(i)}}}$.
\end{compactenum}

\noindent $\qenc(\ket{\qpk},m)$ for $m \in\{0,1\}$
\begin{compactenum}
\item Let $\ket{\qpk_i} \coloneqq \frac{1}{\sqrt{2^\secpar}} \sum_{x^{(i)} \in \{0,1\}^\secpar}\ket{x^{(i)}}\ket{\psi_{\dk,x^{(i)}}}$, and write $\ket{\qpk}$ as $\ket{\qpk}=\bigotimes_{i \in [\lambda]} \ket{\qpk_i}$.
\item Measure the left registers of $\ket{\qpk_i}$ to obtain classical strings $x^{(i)}$. Denote the post-measurement states as $\ket{\psi'_i}$.
\item Set $y^{(i)}\gets \del(\ket{\psi'})$.
\item Pick $k = \{0,1\}^\lambda$ and $r^{(i)} = \{0,1\}^{\size{y^{(i)}}}$ uniformly at random.
\item Set $\tilde{y}^{(i)} = \begin{cases}r^{(i)}&, \text{if } k_i = 0 \\ y^{(i)}&, \text{if } k_i = 1 \end{cases}$.
\item Output $\left(\enc(k, m), \left( (x^{(i)},\tilde{y}^{(i)})\right)_i\right)$
\end{compactenum}

\noindent $\dec_\dk(c)$
\begin{compactenum}
\item Interpret  $c$ as $\left(c', \left( (x^{(i)},\tilde{y}^{(i)})\right)_i\right)$
\item Let $k'_i = \ver(x^{(i)},\tilde{y}^{(i)})$.
\item Output $\dec(k', c')$
\end{compactenum}

\caption{An encryption scheme with quantum public keys.}
\label{fig:public_key_encryption_from_prfspd}
\end{figure}

In this section, we construct a qPKE scheme based on PRFSPD. For that, we need the following result that builds \emph{symmetric}-key encryption from such an assumption.

\begin{proposition}[\cite{BBSS23}] If  quantum-secure PRFSPD exists, then there exists a quantum CPA symmetric encryption with classical ciphertexts.    
\end{proposition}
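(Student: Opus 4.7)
The plan is to encrypt bit by bit, encoding each bit via the proof-of-destruction mechanism, in direct analogy to the outer layer of the public-key construction in \Cref{fig:public_key_encryption_from_prfspd}. Concretely, to encrypt a bit $m \in \{0,1\}$ under the symmetric key $k$, sample a fresh $x$ uniformly from $\{0,1\}^{d(\secpar)}$, run $\ket{\psi^x_k} \gets \gen(k,x)$ and $p \gets \del(\ket{\psi^x_k})$, and output $(x, p)$ when $m=1$ and $(x, r)$ for a uniformly random string $r$ of length $c(\secpar)$ when $m=0$. Decryption of $(x,y)$ simply returns $\ver(k,x,y)$, and multi-bit messages are encrypted component-wise with independent randomness. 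Correctness of the one-bit scheme rests on (i) correctness of $\ver$ on honestly generated proofs, which is directly guaranteed by \Cref{definition:PRFSPD}, and (ii) the fact that $\ver$ rejects a uniformly random string of length $c(\secpar)$ except with negligible probability---otherwise a uniform guess would already break the unclonability game by producing a fresh valid proof without receiving any additional state.

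For CPA security, I would first reduce multi-bit security to single-bit security by a standard hybrid over the message bits, each step flipping one bit of the challenge plaintext. For single-bit indistinguishability I would set up the hybrid sequence $H_0 \to H_1 \to H_2$, where $H_0$ is the real IND-CPA game; $H_1$ replaces every invocation of $\gen(k,\cdot)$ made by the challenger (both for answering encryption queries and for building the challenge ciphertext) with a call to a fresh Haar random state, which is indistinguishable by the pseudorandomness clause of \Cref{definition:PRFSPD} since the challenger accesses $\gen$ only as a classical oracle on freshly sampled inputs; and $H_2$ replaces the proof-of-destruction in the $m=1$ branch of the challenge ciphertext by a uniformly random string, at which point the two challenge distributions are identical and the adversary's advantage is exactly zero.

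The main obstacle will be the hop from $H_1$ to $H_2$: arguing that $\del$ applied to a fresh Haar random state produces a classical string computationally indistinguishable from uniform, even in the presence of the oracle access granted by the reduction. This is \emph{not} a generic consequence of the two axioms in \Cref{definition:PRFSPD}, so the cleanest route I see is a reduction to unclonability of proofs combined with the correctness observation that $\ver$ rejects uniform strings: a distinguisher with non-negligible advantage between a real proof and a uniform string can, by invoking $\ver$ through the cloning game of \Cref{game:cloning-prfspd} and rejection-sampling candidate strings, be converted into a cloning adversary that produces strictly more valid proofs than it received states, contradicting the unclonability of proofs. I expect this reduction to be the technically delicate step; the surrounding hybrid structure and the bit-by-bit reduction are essentially routine.
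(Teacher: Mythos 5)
The paper offers no proof of this proposition---it is imported as a black box from~\cite{BBSS23}---so your argument has to stand on its own, and it does not: the step you flag as the ``main obstacle'' is not a technical delicacy but a genuine failure of the construction itself. Nothing in \Cref{definition:PRFSPD} forces the output of $\del$ (on a PRFS state or on a Haar-random state) to resemble a uniform string. Concretely, take any valid PRFSPD and define $\del'(\ket{\phi})=\del(\ket{\phi})\|1$ and $\ver'(k,x,p\|b)=\ver(k,x,p)\wedge(b=1)$: correctness and pseudorandomness are untouched, and unclonability is preserved (a cloning adversary for the primed scheme yields one for the original by stripping the final bit, distinctness being preserved since every accepted proof ends in $1$). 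Yet under your scheme the ciphertext of $m=1$ always ends in $1$ while the ciphertext of $m=0$ ends in $1$ with probability $1/2$, giving a constant-advantage IND-CPA distinguisher. Your proposed rescue---turning a proof-versus-uniform distinguisher into a cloning adversary by rejection sampling---cannot succeed, because distinguishing does not imply forging: in the counterexample the distinguisher merely reads the last bit and gains no ability to output a valid proof, and rejection sampling only yields valid proofs if they have non-negligible density under some samplable distribution, which is exactly what unclonability of proofs rules out.

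The salvageable parts of your write-up are the correctness claims (honest proofs verify by definition, and uniform strings must be rejected, else a zero-query adversary in Game~\ref{game:cloning-prfspd} outputting a single random guess would win) and the $H_0\to H_1$ hop via pseudorandomness. But the plaintext must be hidden by something other than ``valid proof versus random string.'' One route that uses only the two stated axioms is to move the bit into the PRFSPD \emph{input}: encrypt $b$ as $(x,\del(\gen(k,x\|b)))$ for fresh $x$, and decrypt by testing $\ver(k,x\|0,\cdot)$ and $\ver(k,x\|1,\cdot)$. Security holds because either branch is computationally indistinguishable from $(x,\del(\ket{\phi^{x}}))$ with $\ket{\phi^{x}}$ Haar random, a distribution independent of $b$; correctness (a proof generated for $x\|b$ is rejected for $x\|\bar b$) again follows from unclonability applied to the never-queried input $x\|\bar b$. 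Finally, note that mirroring the outer layer of the construction in \Cref{fig:public_key_encryption_from_prfspd} is not by itself a proof; whether that layer is secure from the bare axioms of \Cref{definition:PRFSPD} is precisely the question your hybrid step $H_1\to H_2$ begs.
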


\begin{theorem}\label{lem:security-prfspd-scheme}
If quantum-secure  PRFSPD with super-logarithmic input size exists, then there exists a public-key encryption  with classical ciphertexts which is IND-CPA-EO secure.
\end{theorem}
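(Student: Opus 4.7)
My plan is to follow the hybrid argument structure of the proof of~\cref{lem:security-prf-scheme}, adapted to the PRFSPD setting. Morally, the construction in~\cref{fig:public_key_encryption_from_prfspd} is a hybrid encryption where the quantum-public-key part transmits a fresh one-time symmetric key $k\in\{0,1\}^{\secpar}$ (one bit per PRFSPD coordinate, encoded by the choice between a real proof and a random string), and the symmetric scheme $(\enc,\dec)$ handles the plaintext. So the goal is to show that $k$ is computationally hidden from the adversary and then invoke the IND-CPA security of the symmetric scheme, which is already known to follow from PRFSPDs via the cited construction in~\cite{BBSS23}.

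I would introduce three hybrids. $H_{0}$ is the original single-challenge IND-CPA-EO game. In $H_{1}$, each time the challenger generates an encryption public key $\ket{\qpk}=\bigotimes_{i}\ket{\qpk_{i}}$ to answer an $\qenc$ query, she eagerly measures the first register of each $\ket{\qpk_{i}}$ to obtain the classical strings $x^{(i)}$ before doing anything else, instead of doing so inside $\qenc$. These measurements commute with the rest of the challenger's operations and with the preparation of additional copies of $\ket{\qpk}$ given to the adversary via the $\qpkgen$ oracle (which are freshly sampled states), so $H_{0}\equiv H_{1}$. In $H_{2}$, for every encryption query and every coordinate $i$, the challenger replaces the real proof $y^{(i)}=\del(\ket{\psi_{\dk,x^{(i)}}})$ with a freshly sampled uniform string; consequently $\tilde{y}^{(i)}$ is a uniform random string regardless of $k_{i}$, and $k$ is information-theoretically independent of $((x^{(i)},\tilde{y}^{(i)}))_{i}$. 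Distinguishing $b=0$ from $b=1$ in $H_{2}$ then reduces directly to an IND-CPA distinguisher for the symmetric scheme.

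The hard part will be showing $H_{1}\approx_{c}H_{2}$. I would proceed by a sub-hybrid argument that switches one real proof to a random string at a time, iterating over the polynomially many encryption queries and the $\secpar$ coordinates, and for each switch I expect to use both PRFSPD security properties in tandem. First, pseudorandomness of PRFSPDs (\cref{definition:PRFSPD}, item~\ref{item:pseudorandomness}) lets me replace the state $\ket{\psi_{\dk,x^{(i)}}}$ at the switched coordinate by a Haar random state, using the fact that $x^{(i)}$ is a freshly measured uniformly random index about which the adversary has only negligible post-measurement information through her $\qpkgen$ oracle copies (by a standard puncturing argument, as in Hybrid $H_{2}$ of~\cref{lem:security-prf-scheme}). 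Second, unclonability of proofs (\cref{definition:PRFSPD}, item~\ref{item:security of proof of independence}) forces $\del$ on a freshly sampled Haar state to be computationally indistinguishable from a uniform random string given the adversary's residual view, since any distinguisher could be turned into a cloning adversary producing an extra valid proof beyond what the $\gen$ queries yield.

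Alternatively, and probably more cleanly, since the cited result~\cite{BBSS23} constructs a CPA-secure symmetric encryption with classical ciphertexts from PRFSPDs using essentially this same masking idea, I would try to abstract the sub-protocol $k\mapsto ((x^{(i)},\tilde{y}^{(i)}))_{i}$ as a key-encapsulation mechanism whose one-wayness / key-indistinguishability is exactly the property established there, and invoke it as a black box. This would localize the entire $H_{1}\to H_{2}$ step into a single invocation of the~\cite{BBSS23} guarantee, with the remaining bookkeeping being routine adjustments to account for the adversary's multi-copy access to $\qpkgen(\dk)$ and to the encryption oracle.
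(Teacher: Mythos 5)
Your hybrid skeleton coincides with the paper's: puncture the challenge inputs ${x^{(i)}}^*$ out of the public key, replace the real proofs of destruction by uniform strings so that $k$ becomes independent of the $\tilde{y}^{(i)}$'s, and finish by reducing to the IND-CPA security of the symmetric scheme. The genuine divergence --- and the gap --- is in how you justify the middle step $H_1\approx_c H_2$. You invoke unclonability of proofs to argue that \del applied to a (now Haar-random) state is indistinguishable from a uniform string, claiming that a distinguisher yields a cloning adversary. That reduction does not go through: a distinguisher between a $c$-bit string and a uniform one outputs a single bit and produces no new \emph{valid} proof, so it cannot be plugged into the cloning game of \cref{game:cloning-prfspd}; validity there is defined relative to the PRFSPD key via $\ver(k,\cdot,\cdot)$, whereas a fresh Haar state has no associated key; and the cloning adversary is granted a \ver oracle that your IND-CPA-EO adversary does not have, so the interfaces do not even match. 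Unclonability is not the property doing work at this step.

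What the paper actually uses there is pseudorandomness alone: the reduction prepares the punctured public keys and the challenge registers by querying the real-or-Haar oracle, and flips an internal bit to interpolate between ``answer with \del of the oracle's outputs'' and ``answer with uniform strings,'' so the whole transition $H_1\to H_2$ is charged to a single invocation of the pseudorandomness property of \cref{definition:PRFSPD}. Because ${x^{(i)}}^*$ is punctured from every copy of the public key, the state at that input occurs nowhere else in the adversary's view, so $\del$ of an independent Haar state is independent of everything else the adversary sees. (Both your argument and the paper's tacitly further assume that \del of a fresh Haar state is close to \emph{uniform}, not merely independent; this is not forced by \cref{definition:PRFSPD}, and it is the one point where an additional argument about \del is needed --- but unclonability does not supply it.) Your fallback of abstracting $k\mapsto((x^{(i)},\tilde{y}^{(i)}))_i$ as a KEM and importing the guarantee from the symmetric-encryption construction of~\cite{BBSS23} is plausible and close in spirit to how the paper uses that citation for the symmetric component, but as written it is a plan rather than a proof.
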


\begin{proof}
Our construction is given in \cref{fig:public_key_encryption_from_prfspd}. It uses a PRFSPD, as well as  a quantum CPA \emph{symmetric} encryption with classical 
ciphertexts. Such symmetric encryption is known to exist, based on PRFSPD:

In order to prove our results, we define the following Hybrids.

\noindent\textbf{Hybrid $H_0$.} This is the original security game.

\noindent\textbf{Hybrid $H_1$.} Same as Hybrid $H_0$, except that the challenger picks ${x^{(i)}}^*$ uniformly at random, instead of providing $\ket{\qpk}$ to the adversary, she provides
\[\ket{\qpk'}=\bigotimes_{i \in [\lambda]} \frac{1}{\sqrt{2^\secpar - 1}} \sum_{x^{(i)} \in \{0,1\}^\secpar, x^{(i)}\ne {x^{(i)}}^*}\ket{x^{(i)}}\ket{\psi_{\dk,x^{(i)}}}.\]
The challenger uses the states $\ket{{x^{(i)}}^*}\ket{\psi_{\dk,{x^{(i)}}^*}}$ to encrypt the challenge.

\noindent\textbf{Hybrid $H_2$.} Same as Hybrid $H_1$, but to answer the encryption queries, the challenger picks each $\tilde{y}^{(i)}$ uniformly at random and answers the encryption queries with $\left(\enc(k, m), \left( (x^{(i)},\tilde{y}^{(i)})\right)_i\right)$

We will now show that, given our assumptions, Hybrids $H_i$ and $H_{i+1}$ are indistinguishable except with probability at most $\negl$ and that every polynomial-time adversary wins $H_3$ with advantage at most $\negl$. We can then use triangle inequality to prove the security of \Cref{lem:security-prfspd-scheme}.

\begin{lemma}
No adversary can distinguish Hybrid $H_0$ and  Hybrid $H_1$ with non-zero advantage.
\end{lemma}
\begin{proof}
The proof follows analogously to \Cref{lem:hybrid1_prf,lem:hybrid2_prf} and the triangle inequality.
\end{proof}

\begin{lemma}
No adversary can distinguish Hybrid $H_2$ and  Hybrid $H_3$ with non-negligible advantage.
\end{lemma}
\begin{proof}
Suppose that there exists an adversary $\adv$ such that
$\Pr[F_1] - \Pr[F_2] \geq \frac{1}{p(\lambda)}$ for some polynomial $p$, where
$F_1$ is the event where $\adv$ outputs $1$ on $H_2$ and
$F_2$ is the event where $\adv$ outputs $1$ on $H_3$. Then, we show that we can construct an adversary $\adv'$ that can break the PRFSPD family scheme.

$\adv'$ will behave as the challenger in the CPA-EO game and instead of computing $\ket{\qpk'}$ by herself, she queries the oracle $O$ (that on input $\ket{x}\ket{0}$ answers either a PRFSPD $\ket{x}\ket{\psi_{\dk,x}}$ or a Haar random state $\ket{x}\ket{\vartheta_x}$). Then, $\adv'$ picks a random bit $b$ and performs as follows:
\begin{itemize}
    \item if $b = 0$, $\adv'$ answers the encryption queries as Hybrid $H_2$
    \item if $b = 1$, $\adv'$ picks each $\tilde{y}^{(i)}$ uniformly at random and answers the encryption queries with $\left(Enc_k(m), \left( (x^{(i)},\tilde{y}^{(i)})\right)_i\right)$
\end{itemize}

$\adv'$ then outputs $1$ iff $\adv$ outputs $1$.

Notice that if $b=0$ and $O$ answers a PRFSPD, then the experiment is the same as Hybrid $H_2$. On the other hand, if $b = 0$ and $O$ is a random oracle or if $b=1$, the experiment is the same as Hybrid $H_3$.
Let us define the event $E_1$ be the event where $\adv'$ outputs $1$ if $O$ is a PRFSPD, and $E_2$ be the event where $\adv'$ outputs $1$ if $O$ answers with Haar random states.

In this case, we have that $\adv'$ distinguishes the PRFS family from Haar random states with probability
\begin{align}
&\Pr[E_1] - \Pr[E_2] \\
&=
\frac{1}{2}\left(
\Pr[E_1 \wedge b=0 ] +
\Pr[E_1 \wedge b=1 ]
- \Pr[E_2 \wedge b=0 ]
- \Pr[E_2 \wedge b=1] \right)\\
&=
\frac{1}{2}\left(
\Pr[F_1]
+ \Pr[F_2]
-  \Pr[F_2]
- \Pr[F_2]
\right) \\
&\geq \frac{1}{2p(\lambda)}.
\end{align}
\end{proof}

\begin{lemma}
 Any polynomially-bounded adversary wins the game in Hybrid $H_3$ with an advantage at most $\negl$.
\end{lemma}
\begin{proof}
Suppose that there exists an adversary $\adv$ such that wins the game in Hybrid $H_3$ with advantage $\frac{1}{p(\lambda)}$ for some polynomial $p$. Then, we show that we can construct an adversary $\adv'$ that can break IND-CPA security of the symmetric-key encryption scheme.

$\adv'$ will simulate $\adv$ and for that, she picks $\dk$ and creates $\ket{\qpk'}$ and in order to answer the encryption queries, they use the encryption oracle provided by the IND-CPA game. $\adv'$ will output $1$ iff $\adv$ outputs $1$. We have that the winning probability of $\adv'$ in its IND-CPA game is the same of $\adv$ in its IND-CPA-EO game.
\end{proof}
\end{proof}

\subsection{QPKE with Quantum Ciphertexts from PRFSs}
\label{sect:qpke-from-prfs}

We finally present our third scheme for qPKE, whose security is based on the existence of PRFS with super-logarithmic input size.

\begin{theorem}
    The construction in \cref{fig:public_key_encryption_from_prfs} is IND-CPA secure (see \cref{def:EQPK_cpa_security}), assuming $\{\ket{\psi_{k,x}}\}_{k,x}$ is a PRFS with super-logarithmic input-size.
\end{theorem}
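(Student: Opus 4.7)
The plan is to adapt the hybrid strategy used in \cref{lem:security-prf-scheme,lem:security-prfspd-scheme}: pre-measure $x^{*}$, puncture the public key at $x^{*}$, swap the $\PRFS$ family for Haar-random states, and finally observe that in the fully Haar world the challenge ciphertext is information-theoretically independent of $b$.

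More concretely, define the following hybrids.
\emph{Hybrid $H_{0}$} is the original IND-CPA game of \cref{def:EQPK_cpa_security}.
\emph{Hybrid $H_{1}$} differs from $H_{0}$ only in that, before the adversary's first oracle call to $\qpkgen(\dk)$, the challenger samples $x^{*} \sample \{0,1\}^{\secpar}$ and every copy of the public key handed to the adversary is replaced by the punctured state
\[
  \ket{\qpk'} \;=\; \frac{1}{\sqrt{2^{\secpar}-1}} \sum_{x \neq x^{*}} \ket{x}\ket{\psi_{\dk,x}},
\]
while the challenge is produced from the held-back copy $\ket{x^{*}}\ket{\psi_{\dk,x^{*}}}$ exactly as in the real scheme. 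The indistinguishability $H_{0} \approx H_{1}$ reduces, as in \cref{lem:hybrid1_prf,lem:hybrid2_prf}, to commuting the measurement of $x^{*}$ past the generation of copies and bounding the fidelity between $\ket{\qpk}^{\otimes t}$ and $\ket{\qpk'}^{\otimes t}$; since the input length is super-logarithmic, $\size{\braket{\qpk}{\qpk'}}^{2t} = (1 - 2^{-\secpar})^{t} = 1 - \negl$ for every polynomial $t$.

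\emph{Hybrid $H_{2}$} replaces the $\PRFS$ family by a Haar-random family: for every $x \in \{0,1\}^{\secpar}$, a state $\ket{\vartheta_{x}}$ is drawn independently from the Haar measure, and the challenger uses $\ket{\vartheta_{x}}$ wherever $H_{1}$ used $\ket{\psi_{\dk,x}}$ (both to build the punctured public key and to form the $b=1$ branch of the challenge ciphertext). The reduction from a distinguisher between $H_{1}$ and $H_{2}$ to a $\PRFS$ distinguisher is the standard one: the reduction $\adv'$ has superposition access to either $\mathcal{O}_{\sf PRFS}(k,\cdot)$ or $\mathcal{O}_{\sf Haar}(\cdot)$, uses each oracle call on the uniform superposition $\frac{1}{\sqrt{2^{\secpar}}} \sum_{x} \ket{x}\ket{0}$ to fabricate one copy of the (un-punctured) public key, locally performs the puncturing at the internally sampled $x^{*}$, and answers the single challenge query by measuring a freshly generated copy and emitting the appropriate ciphertext. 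If the oracle is $\PRFS$ the simulation is exactly $H_{1}$, and if it is Haar it is exactly $H_{2}$; hence $\size{\Pr[H_{1}=1] - \Pr[H_{2}=1]} = \negl$.

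Finally, I would argue that in $H_{2}$ the adversary's advantage is identically zero. By construction, $\ket{\vartheta_{x^{*}}}$ is an independent Haar-random state; it appears in the adversary's view only through a single copy inside the challenge ciphertext, because the punctured public key no longer contains the block indexed by $x^{*}$. Since the single-copy Haar average is the maximally mixed state, $\mathbb{E}_{\vartheta_{x^{*}} \sim \haar}[\ketbra{\vartheta_{x^{*}}}] = I/2^{n}$, the $b=0$ ciphertext $(x^{*}, I/2^{n})$ and the $b=1$ ciphertext $(x^{*}, \ketbra{\vartheta_{x^{*}}})$ are identically distributed as mixed states, so $\Pr[H_{2}=1] = 1/2$. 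Combining the three bounds by the triangle inequality finishes the proof. The main obstacle I anticipate is the $H_{0} \to H_{1}$ step: unlike in the classical-function case, here the amplitudes of the superposition include the entangled block $\ket{\psi_{\dk,x}}$, so the puncturing has to be argued at the level of fidelity of the whole bipartite state, and this is precisely where super-logarithmic input size is essential; without it, the $(1 - 2^{-\secpar})^{t}$ bound ceases to be $1 - \negl$.
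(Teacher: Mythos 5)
Your proposal is correct and follows exactly the strategy the paper itself invokes: its proof of this theorem is a one-line deferral to the hybrid arguments of \cref{lem:security-prf-scheme,lem:security-prfspd-scheme}, and your hybrids $H_0 \to H_1 \to H_2$ (pre-commit and puncture at $x^*$, swap the PRFS for Haar-random states, then conclude information-theoretically from $\E_{\vartheta_{x^*}}[\ketbra{\vartheta_{x^*}}] = I/2^{n}$) are precisely the intended instantiation, including the correct fidelity bound $(1-2^{-\secpar})^{t}$ for the puncturing step. One mechanical slip in your $H_1 \to H_2$ reduction: you say the challenge is answered by \emph{measuring a freshly generated copy} of the public key, but that measurement yields a uniformly random $x$ which is almost surely not the pre-committed $x^*$ at which all the handed-out keys were punctured, so the simulated view would be inconsistent with $H_1$. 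The reduction should instead query the quantum-accessible oracle once on the computational basis state $\ket{x^*}\ket{0}$ to obtain $\ket{x^*}\ket{\psi_{\dk,x^*}}$ (resp.\ $\ket{x^*}\ket{\vartheta_{x^*}}$) and build the challenge ciphertext from that. With this one-line fix the argument goes through as you describe.
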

The proof of this theorem uses the same proof strategy of~\cref{lem:security-prf-scheme} and~\cref{lem:security-prfspd-scheme}, the only difference is that here the scheme is only IND-CPA secure, while the previous ones are IND-CPA-EO secure.

\begin{figure}[H]
%\framebox{\parbox{\hsize}{
\noindent\textbf{Assumes:} A PRFS family $\{\ket{\psi_{\dk,x}}\}_{\dk,x}$ with super-logarithmic input-size.\\
\noindent$\gen(\secparam)$
\begin{compactenum}
\item $\dk \gets_R \{0,1\}^\secpar$.
\end{compactenum}

\noindent $\qpkgen(\dk) $
\begin{compactenum}
\item Output $ \ket{\qpk}=\frac{1}{\sqrt{2^\secpar}} \sum_{x \in \{0,1\}^\secpar}\ket{x}\ket{\psi_{\dk,x}}$.
\end{compactenum}

\noindent $\qenc(\ket{\qpk},m)$ for $m \in\{0,1\}$
\begin{compactenum}
\item Measure left register, denoted by $x$. Let $\ket{\phi} = \ket{\psi_{\dk,x}}$ if $m=0$, and a maximally mixed state otherwise.
\item Output $c=(x,\ket{\phi})$.
\end{compactenum}

\noindent $\dec_\dk(c)$
\begin{compactenum}
\item Interpret $c$ as $(x,\ket{\phi})$
\item Output 0 if $\ket{\phi}=\ket{\psi_{\dk,x}}$, otherwise output 1.
\end{compactenum}

\caption{An encryption scheme with quantum public keys based on a PRFS.}
\label{fig:public_key_encryption_from_prfs}
\end{figure}

% \clearpage

%%%%%%%%%%%%%%%%%%%%%%%%%%%%
\ifnum\masterthesis=0
    \subsection*{Acknowledgments}
\fi
%%%%%%%%%%%%%%%%%%%%%%%%%%%%
%NON-ANON PART

\ifnum\anon=0
%ANON PART
We wish to thank Prabhanjan Ananth and Umesh Vazirani for related discussions.
ABG and QHV are supported by ANR JCJC TCS-NISQ ANR-22-CE47-0004, and by the PEPR integrated project EPiQ ANR-22-PETQ-0007 part of Plan France 2030.
OS was supported by the Israeli Science Foundation (ISF) grant No. 682/18 and 2137/19, and by the Cyber Security Research Center at Ben-Gurion University.

\BeforeBeginEnvironment{wrapfigure}{\setlength{\intextsep}{0pt}}
\begin{wrapfigure}{r}{90px}
    %\centering
    \includegraphics[width=40px]{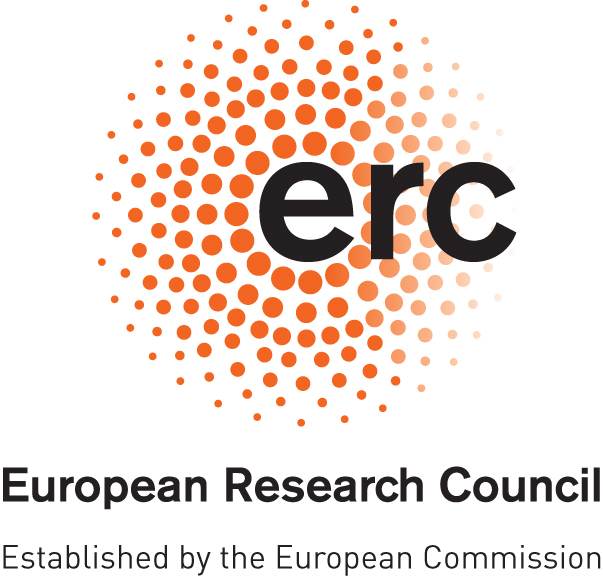}  \includegraphics[width=40px]{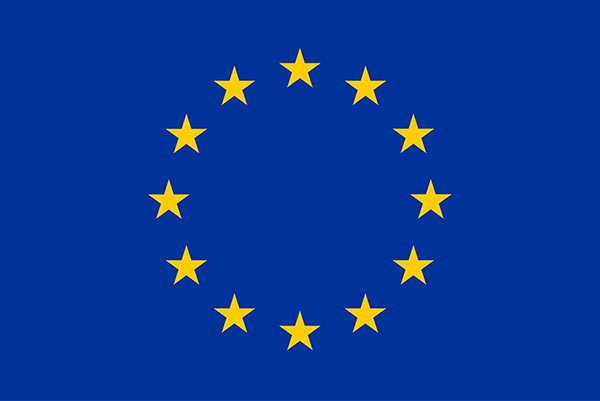}
\end{wrapfigure} OS has received funding from the European Research Council (ERC) under the European Union’s Horizon 2020 research and innovation programme (grant agreement No 756482).

\fi
%%%%%%%%%%%%%% HANDLING OF BIBLIOGRAPHY
%\nocite{*} %to check if the added bib entries compile correctly with the old file.
\ifnum\sigconf=1
    \bibliographystyle{ACM-Reference-Format}
\else
    \ifnum\cryptology=1
        \bibliographystyle{abbrv}
    \else
        \bibliographystyle{alphaabbrurldoieprint}

    \fi
\fi

\ifnum\masterthesis=0
    \ifnum\smallbib=1
        {\footnotesize \bibliography{main} }
    \else
        \bibliography{main}
    \fi
\fi

\appendix
\ifnum\shownomenclature=1
\printnomenclature[1in]
%There is a label which is created automatically in the macros of the following form: \label{sec:nomenclature}
\fi
\ifnum\draft=1
This appendix is shown only if draft mode is enabled.
    \input{nextver}
\fi
\ifnum\masterthesis=1
    \ifnum\smallbib=1
        {\footnotesize \bibliography{main} }
    \else
        \bibliography{main}
    \fi
\fi

\end{document}